\newacro{IR}{Intermediate Representation}
\newacro{KL}{Kullback–Leibler}
\def\eqref#1{equation~\ref{#1}}
\def\1{\bm{1}}
\def\rr{{\textnormal{r}}}
\def\va{{\bm{a}}}
\def\vh{{\bm{h}}}
\def\vr{{\bm{r}}}
\def\vv{{\bm{v}}}
\def\vw{{\bm{w}}}
\def\vx{{\bm{x}}}
\DeclareMathAlphabet{\mathsfit}{\encodingdefault}{\sfdefault}{m}{sl}
\SetMathAlphabet{\mathsfit}{bold}{\encodingdefault}{\sfdefault}{bx}{n}
\def\gE{{\mathcal{E}}}
\def\gL{{\mathcal{L}}}
\newcommand{\R}{\mathbb{R}}
\newcommand{\yuandong}[1]{\textcolor{red}{[Yuandong: #1]}}
\newcommand{\benoit}[1]{\textcolor{purple}{[Benoit: #1]}}
\newcommand{\chris}[1]{\textcolor{purple}{[chris: #1]}}
\newcommand{\todo}[1]{\textcolor{red}{ \underline{\textbf{[TODO]}}: #1}}
\newcommand{\kevin}[1]{\textcolor{purple}{[kevin: #1]}}
\newcommand{\youwei}[1]{\textcolor{purple}{[youwei: #1]}}
\newcommand{\yuandong}[1]{}
\newcommand{\benoit}[1]{}
\newcommand{\chris}[1]{}
\newcommand{\todo}[1]{}
\newcommand{\kevin}[1]{}
\newcommand{\youwei}[1]{}
\theoremstyle{plain}
\newtheorem{theorem}{Theorem}[section]
\newtheorem{lemma}[theorem]{Lemma}
\theoremstyle{definition}
\theoremstyle{remark}
\icmltitlerunning{Learning Compiler Pass Orders using Coreset and Normalized Value Prediction}
\begin{document}

\twocolumn[
\icmltitle{Learning Compiler Pass Orders using Coreset and Normalized Value Prediction}


\begin{icmlauthorlist}
\icmlauthor{Youwei Liang$^*$}{uni}
\icmlauthor{Kevin Stone$^*$}{comp}
\icmlauthor{Ali Shameli}{comp}
\icmlauthor{Chris Cummins}{comp}
\icmlauthor{Mostafa Elhoushi}{comp}
\icmlauthor{Jiadong Guo}{comp}
\icmlauthor{Benoit Steiner}{comp}
\icmlauthor{Xiaomeng Yang}{comp}
\icmlauthor{Pengtao Xie}{uni}
\icmlauthor{Hugh Leather}{comp}
\icmlauthor{Yuandong Tian}{comp}
\end{icmlauthorlist}

\icmlaffiliation{comp}{Meta AI}
\icmlaffiliation{uni}{University of California, San Diego}

\icmlcorrespondingauthor{Kevin Stone}{kevinleestone@meta.com}

\icmlkeywords{Machine Learning, ICML, Reinforcement Learning, Program Optimization, Graph Neural Networks}

\vskip 0.3in
]



\printAffiliationsAndNotice{\icmlEqualContribution} 

\def\ee{\mathbb{E}}
\def\vphi{\boldsymbol{\phi}}
\def\vw{\mathbf{w}}
\def\vx{\mathbf{x}}
\def\vh{\mathbf{h}}
\def\dd{\mathrm{d}}
\def\diag{\mathrm{diag}}

\newcommand{\programl}{ProGraML\xspace}

\begin{abstract}
Finding the optimal pass sequence of compilation can lead to a significant reduction in program size and/or improvement in program efficiency. Prior works on compilation pass ordering have two major drawbacks. They either require an excessive budget (in terms of compilation steps) at compile time or fail to generalize to unseen programs. In this paper, for code-size reduction tasks, we propose a novel pipeline to find program-dependent pass sequences within 45 compilation calls. It first identifies a \emph{coreset} of 50 pass sequences via greedy optimization of a submodular function, and then learns a policy with Graph Neural Network (GNN) to pick the optimal sequence by predicting the normalized values of the pass sequences in the coreset. Despite its simplicity, our pipeline outperforms the default \texttt{-Oz} flag by an average of 4.7\% over a large collection (4683) of unseen code repositories from diverse domains across 14 datasets. In comparison, previous approaches like reinforcement learning on the raw pass sequence space may take days to train due to sparse reward, and may not generalize well in held-out ones from different domains. Our results demonstrate that existing human-designed compiler flags can be improved with a simple yet effective technique that transforms the raw action space into a small one with denser rewards. 
\end{abstract}

\begin{figure}
    \centering
    \includegraphics[width=0.72\linewidth]{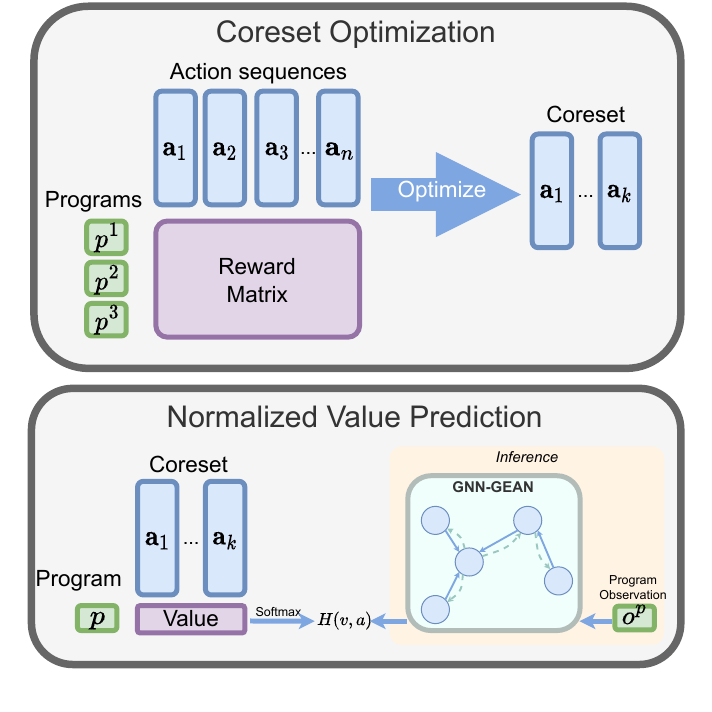}
    \caption{A depiction of our main contributions. (Top) \textbf{Coreset Optimization}: A process for discovering a small set of pass sequences (\textit{coreset}) that generalizes. (Bottom) \textbf{Normalized Value Prediction} A process where our model learns to predict the normalized value of pass sequences from the coreset. (Bottom inset) Our model, Graph Edge Attention Network (\textbf{GEAN}), for encoding program observations.}
    \label{fig:overview_figure}
\end{figure}

\section{Introduction}

For more efficient execution with fewer resources (e.g., memory, CPU, and storage), applying the right ordering for compiler optimization passes to a given program, i.e., \emph{pass ordering}, is an important yet challenging problem. Manual efforts require expert knowledge and are time-consuming, error-prone, and often yield sub-par results, due to the huge size of the search space. For example, the LLVM compiler has 124 different compilation flags. If the pass sequences have a length of 45, then the possible number of sequences ($124^{45} \sim 10^{94}$) is already more than the atoms in the universe ($\sim 10^{80}$~\cite{atoms}). 

In recent years, machine learning (ML)-guided pass ordering has emerged as an interesting field to replace this laborious process~\citep{ml4copt-survey}. Along this line, many works show promising results using various optimization and/or ML techniques (e.g., reinforcement learning~\citep{neurovectorizer}, language modelling~\cite{e2e-dl}, evolutionary algorithms~\citep{kulkarni2012mitigating}, etc). 

However, there are several limitations. Some previous works (e.g., MLGO~\cite{MLGO}, MLGoPerf~\cite{MLGOPerf}) run adaptive search algorithms to optimize a set of programs for many hours. While this achieves strong performance gain, the procedure can be slow and does not distill knowledge from past experience and requires searching from scratch for unseen programs. Recent works like Autophase~\cite{AutoPhase} learn a \emph{pass policy} via reinforcement learning, and applies it to unseen programs without further search procedure, and GO~\cite{zhou2020transferable} fine-tunes the models for unseen programs.  
These approaches work for unseen programs from the same/similar domain, but can still be quite slow in the training stage, and do not show good generalization to programs from very different domains. 

In this work, we propose a novel pass ordering optimization pipeline to reduce the \emph{code size} of a program. 
As a first contribution, we formulate the search for a \emph{universal} core set of pass sequences (termed \textbf{coreset}) as an optimization problem of a submodular reward function and use a greedy algorithm to approximately solve it. The resulting coreset consists of 50 pass sequences with a total number of passes of 625. Importantly, it leads to very strong performance across programs from diverse domains, ranging from the Linux Kernel to BLAS. 
Specifically, for one unseen program in the evaluation set, there exists one of the 50 pass sequences that leads to an average code size reduction of $5.8\%$ compared to the default \texttt{-Oz} setting, across 10 diverse codebases (e.g. Cbench~\cite{cbench}, MiBench~\cite{mibench}, NPB~\cite{npb}, CHStone~\cite{chstone}, and Anghabench~\cite{anghabench}) of over one million programs in total. Considering the huge search space of compiler flags, this is a very surprising finding. 

While it is time-consuming to find an optimal pass sequence with an exhaustive enumeration of the core subset of pass sequences, as a second contribution, we find that the (near) optimal pass sequence can be directly predicted with high accuracy via our \emph{Graph Edge Attention Network} (GEAN), a graph neural network (GNN) architecture adapted to encode the augmented ProGraML~\cite{deepdataflow} graphs of programs. 
Therefore, we can run a few pass sequences selected by the model on an unseen program to obtain a good code size reduction. This enables us to find a good flag configuration that leads to 4.7\% improvement on average, with just 45 compilation passes, 
a reasonable trade-off between the cost spent on trying compilation passes and the resulting performance gain. 

We compare our approach with extensive baselines, including reinforcement learning (RL) -based methods such as PPO, Q-learning, and behavior cloning. We find that RL-based approaches operating on the original compiler pass space often suffer from unstable training (due to inaccurate value estimation) and sparse reward space and fail to generalize to unseen programs at inference. 
In comparison, our approach transforms the vast action space into a smaller one with much more densely distributed rewards. In this transformed space, approaches as simple as behavior cloning can be effective and generalizable to unseen programs.

\section{Related Work}
Graph structured data are present in numerous applications and it has been shown that taking advantage of this data can help us train very effective machine learning models. \cite{cdfg} use abstract syntax trees and control flow graphs for learning compiler optimization goals. They show that using such graphs allows them to outperform state-of-the-art in the task of heterogeneous OpenCL mapping. \cite{graphcodebert} uses a transformer based model with a graph guided masked attention that incorporates the data flow graph into the training. They achieve state of the art performance in four tasks including code search, clone detection, code translation, and code refinement.

As a contender to graph neural networks, \cite{graphit} uses transformers to process graphs. They show that if we effectively encode positional and local sub-structures of graphs and feed them to the transformer, then the transformer can outperform the classical GNN models. They test their model on classification and regression tasks and achieve state of the art performance. \chris{Seems unrelated:} In~\cite{curl}, they used an unsupervised model to learn embeddings of high dimensional pixel data using contrastive learning. They then use this embedding for downstream reinforcement learning tasks.

\section{Methodology}

\subsection{Action space}

The CompilerGym framework~\cite{compilergym} provides a convenient interface for the compiler pass ordering problem. The default environment allows choosing one of $124$ discrete actions at each step corresponding to running a specific compiler pass. In this work we will use the term pass interchangeably with action. We fix episode lengths to 45 steps to match the setup in~\cite{AutoPhase}. Given that our trajectories have a length of 45 steps, this means we have $124^{45}\sim 1.6\times 10^{94}$ possible pass sequences to explore. To find an optimal pass sequence for a program, we can apply some existing reinforcement learning methods including Q learning like DQN~\cite{dqn} and policy gradient like PPO~\cite{ppo}. 

\textbf{Pass Sequences} However for this problem it turns out that certain pass sequences are good at optimizing many different programs (where ``good'' is defined as better than the compiler default \texttt{-Oz}). We found that constraining the action space to a learned set of pass sequences enables state of the art performance and also significantly reduces the challenge of exploration. This allows us to cast the problem as one of supervised learning over this set of pass sequences. We use the following algorithm to find a good set of pass sequences.

\begin{figure}[t!]
    \centering
    \includegraphics[width=0.6\linewidth]{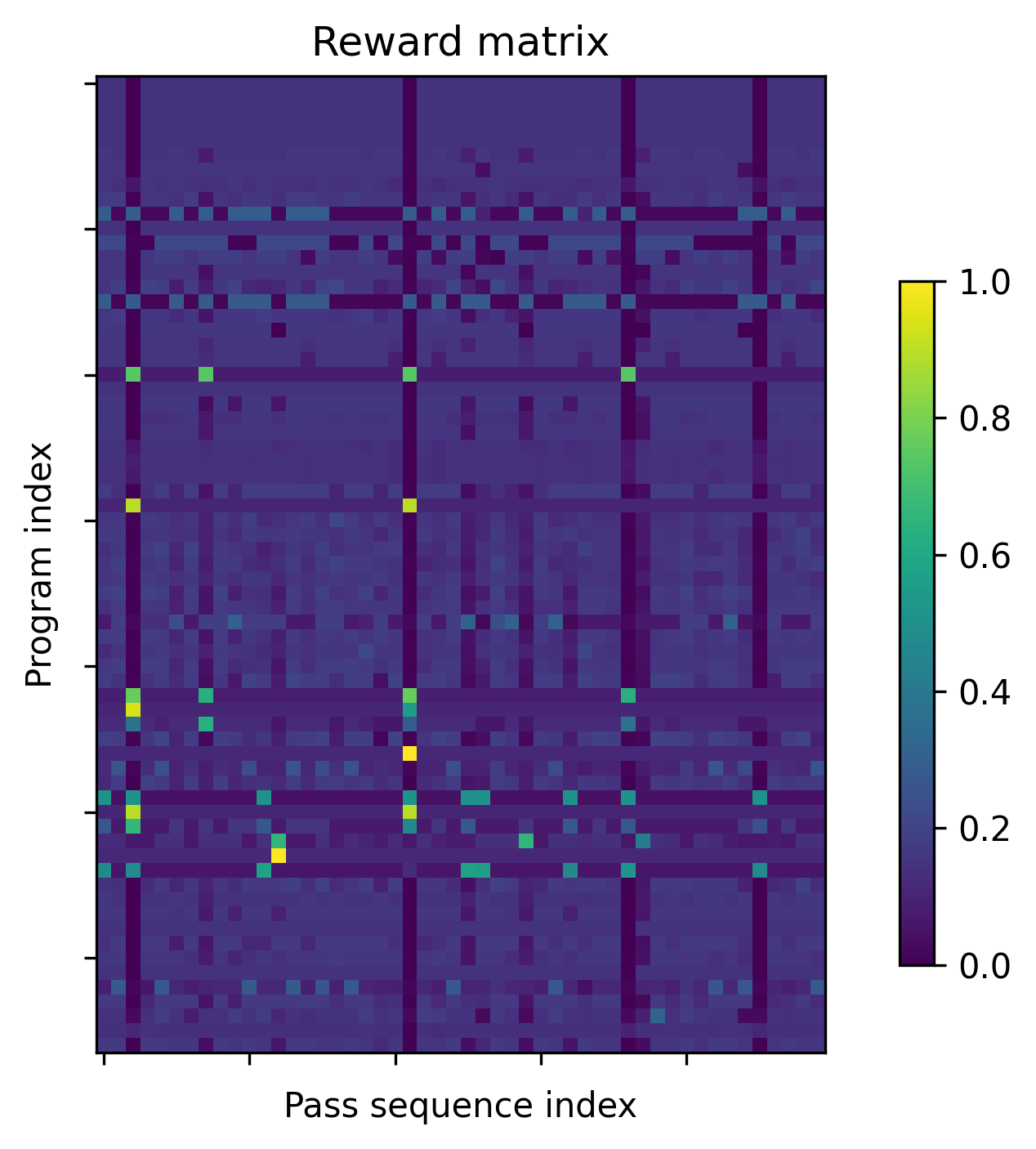}
    \caption{An exemplar reward matrix for 67 programs and 50 pass sequences. Most of the pass sequences do not lead to strong rewards, except for a few. On the other hand, certain pass sequences (i.e., columns) can lead to high rewards for multiple programs simultaneously and thus are good candidates for the coreset.}
    \label{fig:reward_matrix}
\end{figure}

\def\rr{\mathbb{R}}

Suppose we have $N$ programs and $M$ promising pass sequences. Let $R = [r_{ij}] \in \rr^{N\times M}$ be the reward matrix, in which $r_{ij} > 0$ is the \emph{ratio} of the codesize of $i$-th program if applied with $j$-th pass sequence, compared to \texttt{-O0}. $r_{ij} > 1$ means that the $j$-th pass sequence does \emph{better} than \texttt{-O0} in codesize reduction for $i$-th program, and $r_{ij} < 1$ means it performs worse. The reward matrix is normalized per row, by the maximum reward for each program, so that the optimal pass sequence has reward of 1 for each program. 

Then we aim to pick a subset $S$ of $K$ pass sequences, called the \emph{coreset}, from all $M$ pass sequences, so that the overall saving $J(S)$ is maximized:

\begin{equation}
\max_{|S| \le K} J(S) = \sum_{i=1}^N \max_{j \in S} r_{ij} \label{eq:obj-picking-coreset}
\end{equation}

\textbf{Finding $M$ candidate pass sequences}. Note that there can be exponential number of pass sequences, and we cannot construct the entire reward matrix, instead we seed a list of candidate action trajectories. For this, we run a random policy on a subset of $M$ (17500) selected training programs. For each selected program, we run $E$ (200) episodes and pick the best pass sequence as one of the $M$ candidates. If part of the best pass sequence leads to the same state, they are truncated so that the sequence becomes shorter. If multiple pass sequences yield the same reward we only retain the first after ordering them length-lexicographically. On average these last two steps reduce the length of the candidate pass sequences by $80\%$.

We then construct the reward $r_{ij}$ by applying the $j$-th best pass sequence to program $i$, and comparing it with \texttt{-O0}.

\textbf{Finding the best coreset $S$ with greedy algorithm}. As a function defined in subsets, $J(S)$ can be proven to be a nonnegative and monotone submodular function (See Appendix). While maximizing a submodular function is NP-hard~\cite{ward2016maximizing}, the following \emph{greedy algorithm} is proven to be an efficient approximate algorithm that leads to fairly good solutions~\cite{nemhauser1978analysis}. Starting from $S_0 = \emptyset$, at each iteration $t$, it picks a new pass sequence $j_t$ as follows:
\begin{equation}
j_t := \arg\max_{j \notin S_{t-1}} J(S_{t-1} \cup \{j\})
\end{equation}
And $S_t \leftarrow S_{t-1} \cup \{j_t\}$ until we pick $K$ pass sequences. We set $K = 50$ in this paper.

Given the discovered coreset $S$, we define a \textbf{generalized action} as a pass sequence in $S$. Applying a generalized action to a program means that we roll out the corresponding pass sequence on the program and return the best program state (i.e., having the highest cumulative reward) (it is feasible because we can cache the program state for each step).

\subsection{Normalized Value Prediction} \label{sec:nvp}

After discovering the ``good'' pass sequences (i.e., the coreset), we can turn the problem of the sequential decision-making on compiler passes into a problem of supervised learning. The target is to train a model to predict the best pass sequence conditioned on the program, where the label of the program is the index of the pass sequence that results in the largest code size reduction. However, one important observation we have is that there are typically multiple pass sequences in the coreset that all result in the largest code size reduction (see Figure~\ref{fig:gnn_qvalue} for the examples). Therefore, instead of using the single-class classification method with a cross entropy loss, we leverage the fact we have access to the values for all pass sequences. We predict the softmax normalized value of each pass sequence with a cross entropy loss detailed below. This approach is similar to behavior cloning~\cite{bc} but with soft targets over the coreset.

For a program $p$, we roll out all pass sequences in the coreset on it, obtaining a reward $r_j^p$ for the $j$-th sequence (i.e., the highest cumulative reward observed during the rollout of the pass sequence), which forms a value vector $\vr^p = [r_1^p, \dots, r_K^p]$. Then, the normalized values of the pass sequences are defined by
\begin{align}\label{eq:normalizing_values}
    \vv^p = \operatorname{Softmax}(\vr^p / T)
\end{align}
where $T$ is a temperature parameter.

For an initial observation $o^p$ of a program, our model outputs a probability distribution, $\va = f(o^p)$, over the pass sequences. The target of the training is to make $\va$ close to the normalized values of the pass sequences. To this end, we use the \ac{KL} divergence to supervise the model, which can be reduced to the following cross entropy loss up to a constant term.
\begin{align}
    \gL(\va^p, \vv^p) = -\sum_{j=1}^K a_j^p \log v_j^p
\end{align}

\subsection{Program Representations} 
Since we use the CompilerGym~\citep{compilergym} environments for program optimization, we exploit the program representations from CompilerGym, where program source code is converted to LLVM \ac{IR}~\citep{llvm} and several representations are constructed from the IR, including the \programl graph~\citep{deepdataflow}, the Autophase feature~\citep{AutoPhase}, and the Inst2vec feature~\citep{ben2018neural}. We choose to use the LLVM environment from CompilerGym because the LLVM ecosystem is a popular compiler infrastructure that powers Swift, Rust, Clang, and more. 

\textbf{Autophase} We use the Autophase features~\citet{AutoPhase} to build some baseline models for comparison, which will be detailed in Section~\ref{sec:baseline_methods}. The Autophase feature is a 56-dimension integer feature vector summarizing the LLVM IR representation, and it contains integer counts of various program properties such as the number of instructions and maximum loop depth. we use an MLP to encode it and output a program representation.

\textbf{ProGraML} As a part of our main method, we leverage \programl~\citep{deepdataflow} graphs for training GNN models. \programl is a graph-based representation that encodes semantic information of the program which includes control flow, data flow, and function call flow. This representation has the advantage that it is not a fixed size - it does oversimplify large programs - and yet it is still a more compact format than the original \ac{IR} format. 
We list three bullet points of the \programl graph below.
\begin{itemize}
    \item \textbf{Node features} Each node in a \programl graph have 4 features described in Table~\ref{tab:gnn_features}. 
    The ``text'' feature is a textual representation and the main feature that captures the semantics of a node. For example, it tells us what an ``instruction'' node does (e.g., it can be \texttt{alloca}, \texttt{store}, \texttt{add}, etc). 
    \item \textbf{Edges features} Each edge in a \programl graph have 2 features described in Table~\ref{tab:gnn_features}. 
    \item \textbf{Type graph} There is an issue with the \programl graph. Specifically, a node of type variable/constant node can end up with a long textual representation (for the ``text'' feature) if it is a composite data structure. For example, a struct (as in C/C++) containing dozens of data members needs to include all the members in its ``text'' feature. In other words, the current \programl representation does not automatically break down composite data types into their basic components. Since there is an unbounded number of possible structs, this prevents 100\% vocabulary coverage on any IR with structs (or other composite types). 
    To address this issue, we propose to expand the node representing a composite data type into a type graph. Specifically, a pointer node is expanded into this type graph: {\lsstyle\verb|[variable] <- [pointer] <- [pointed-type]|}, where {\lsstyle\verb|[...]|} denotes a node and {\lsstyle\verb|<-|} denotes an edge connection. A struct node is expanded into a type graph where all its members are represented by individual nodes (which may be further expanded into their components) and connected to a {\lsstyle\verb|struct|} node. An array is expanded into this type graph: {\lsstyle\verb|[variable] <- [array] <- [element-type]|}. The newly added nodes are categorized as \texttt{type} nodes and the edges connecting the type nodes are \texttt{type} edges. The type nodes and type edges constitute the type sub-graphs in the \programl graphs. In this manner, we break down the composite data structures into the type graphs that consist of only primitive data types such as \texttt{float} and \texttt{i32}. 
\end{itemize}

\subsection{Network Architecture}
\label{sec:gcn-architecture}

Since the Autophase feature can be encoded by a simple MLP, we discuss only the network architectures for encoding the \programl graphs in this section.

We use a graph neural network (GNN) as the backbone to encode the \programl graphs and output a graph-level representation. The GNN encodes the graph via multiple layers of message passing and outputs a graph-level representation by a global average pooling over the node features. The goal of graph encoding is to use the structure and relational dependencies of the graph to learn an embedding that allows us to learn a better policy. To this end, we experimented with several different GNN architectures such as Graph Convolutional Network (\textbf{GCN})~\citep{gcn}, Gated Graph Convolutions Network (\textbf{GGC})~\cite{li2015gated}, Graph Attention Network (\textbf{GAT})~\cite{gatv2}, Graph Isomorphism Network (\textbf{GIN})~\citep{gin}. To better capture the rich semantics of node/edge features in the \programl graphs, we propose Graph Edge Attention Network (\textbf{GEAN}), a variant of the graph attention network~\citep{velivckovic2017graph}. These GNNs leverage both the node and edge features, so we start by presenting how to embed the node and edge features.

\begin{table}[t!]
    \centering
    \resizebox{.96\columnwidth}{!}{%
    \begin{tabular}{lll}
    \toprule
    & \textbf{Feature} & \textbf{Description} \\
    \midrule
    \multirow{3}{*}{\rotatebox[origin=c]{90}{Node}}& type & One of \{instruction, variable, constant, \textbf{type}\} \\
        & text  & Semantics of the node \\
         & function & Integer position in function \\
         & block & Integer position in IR basic block \\
    \midrule
    \multirow{2}{*}{\rotatebox[origin=c]{90}{Edge}}&  flow  & Edge type. One of \{call, control, data, \textbf{type}\} \\
    & position & Integer edge position in flow branching \\
    \bottomrule
    \end{tabular}%
    }
    \caption{Features in the \programl graph representation which we augment with type information (changes highlighted). We ablate the augmentations in Section~\ref{ablation-studies}.}
    \label{tab:gnn_features}
\end{table}

\textbf{Node embedding} For the ``text'' features of the nodes, we build a vocabulary that maps from text to integer. The vocabulary covers all the text fields of the nodes in the graphs in the training set. The final vocabulary consists of 117 unique textual representations, and we add an additional item ``unknown'' to the vocabulary which denotes any text features that may be encountered at inference time and we have not seen before. The $i$-th textual representation is embedded using a learnable vector $\vv_i \in \R^d$, where $d$ is the embedding dimension. The ``type'' feature is not used because it can be inferred from the ``text'' feature.

\textbf{Edge embedding} The edge embedding is the sum of three types of embedding as the following.
\begin{itemize}
    \item \textbf{Type embedding} We have 4 types of edge flows, so we use 4 learnable vectors to represent them. 
    \item \textbf{Position embedding} The ``position'' feature of an edge is a non-negative integer which does not have an upper bound. We truncate any edge positions larger than 32 to 32 and use a set of 32 learnable vectors to represent the edge positions. 
    \item \textbf{Block embedding} We use the block indices of the two nodes connected by the edge to construct a new edge feature. The motivation is that whether the edge goes beyond an IR basic block can influence program optimization. Suppose the block indices of the source node and the target node of an edge are respectively $b_i$ and $b_j$. We get the relative position of the two nodes with respect to IR basic blocks in the following way: $p_{block} = \text{sign}(b_i - b_j)$. If the edge connects two nodes in the same IR basic block, then $p_{block}$ is 0. And $p_{block}=\pm 1$ indicates the edge goes from a block to the next/previous block. There are 3 possible values for $p_{block}$, so it is embedded using 3 learnable vectors.
\end{itemize}
The final embedding of an edge is the sum of its type, position, and block embedding vectors.

\textbf{Graph mixup} We note that the \programl graphs of two programs can be composed into a single graph without affecting the semantics of the two programs. And their value vectors can be added up to correctly represent the value vector of the composite graph. In this manner, we can enrich the input space to the GNNs and mitigate model overfitting for the normalized value prediction method. 

\textbf{Graph Edge Attention Network} We introduce the GEAN in this paragraph and defer its mathematical details to the Appendix. There are two main differences between the GAT and GEAN. 1) GEAN adopts a dynamic edge representation. Specifically, GAT uses the node-edge-node feature to calculate the attention for neighborhood aggregation, while GEAN uses the node-edge-node feature to calculate not only the attention but also a new edge representation. Then, the updated edge representation is sent to the next layer for computation. Note that GAT uses the same edge embedding in each layer. We conduct an ablation study showing that the edge representation in GEAN improves the generalization of the model. 
2) GAT treats the graph as an undirected graph while GEAN encodes the node-edge-node feature to output an updated node-edge-node feature, where the two updated node features represent the feature to be aggregated in the source node and the target node, respectively. This ensures that the directional information is preserved in the neighborhood aggregation.

\subsection{Dataset Preparation}

\def\cs{\texttt{Curated Suite}}
\def\bs{\texttt{Open Source}}
\def\sg{\texttt{Synthetically Generated}}

Overfitting issues could happen if training is performed on a small subset of programs, or the set of programs is not diverse enough. To mitigate this we find it helpful to create an aggregate dataset that uses many different public datasets as curated by CompilerGym. CompilerGym gives us access to $14$ different datasets constructed using two different methods.

\begin{itemize}
\item \texttt{Curated}
These are small collections of hand-picked programs. They are curated to be distinct from one another without overlap and are not useful for training. Typically programs are larger as they may comprise multiple source files combined into a single program. These are commonly used for evaluating compiler optimization improvements.
\item \texttt{Uncurated}
These are comprised of individual compiler IRs from building open source repositories such as Linux and Tensorflow. We also include synthetically generated programs, targeted for compiler testing (not optimization).
\end{itemize}

For our aggregate dataset we decided to holdout the entirety of the four curated datasets for use as an out-of-domain test set. This is important because they represent the types of programs we expect to see in the wild. We also split the uncurated datasets into train, validaton, and test programs.

\begin{table}[hbt!]
\centering
\resizebox{.9\columnwidth}{!}{%
\begin{tabular}{llrrr}
\toprule
Type & Dataset & Train & Val & Test \\
\midrule
\multirow{10}{*}{Uncurated}&anghabench-v1&70,7000&1,000&2,000 \\
&blas-v0&133&28&29 \\
&github-v0&7,000&1,000&1,000 \\
&linux-v0&4,906&1,000&1,000 \\
&opencv-v0&149&32&32 \\
&poj104-v1&7,000&1,000&1,000 \\
&tensorflow-v0&415&89&90 \\
&clgen-v0&697&149&150 \\
&csmith-v0&222&48&48 \\
&llvm-stress-v0&697&149&150 \\
\midrule 
\multirow{4}{*}{Curated}&cbench-v1&0&0&11 \\
&chstone-v0&0&0&12 \\
&mibench-v1&0&0&40 \\
&npb-v0&0&0&121 \\
\midrule
Total& - &728,219&4,495&4,683 \\

\bottomrule
\end{tabular}%
}
\caption{CompilerGym dataset types and training splits. The hand curated datasets are used solely to evaluate generalization to real world program domains at test time.}
\end{table}

\subsection{Evaluation}
For all our metrics and rewards we leverage the IR instruction count as value we are trying to minimize. We also report metrics on each CompilerGym dataset as well as the mean over datasets to get a single number to compare overall results.

\begin{itemize}
\item
The mean percent improved over \texttt{-Oz} (\textbf{MeanOverOz}) defined as following:
\begin{equation} \label{eq:meanOverOz}
  \Bar{I}^{Oz} = \mathrm{MeanOverOz} := \frac{1}{\mathcal{|P|}}
  \sum_p{\frac{I^{Oz}_p - I^{\pi_\theta}_p}{I^{Oz}_p}},
\end{equation}
where $p$ is a specific program from the set of programs $\mathcal{P}$ in the dataset. $I^{Oz}_p$ is the number of IR instructions in the program after running the default compiler pass \texttt{-Oz}. $I^{\pi_\theta}_p$ is the number of IR instructions in the program after applying the policy under consideration. We can think of this as a simple average of the percent improvement over \texttt{-Oz}. 
\item
We also look compare the geometric mean (\textbf{GMeanOverOz}) of final sizes across all programs relative to \texttt{-Oz} to give a weighted comparison that is insensitive to outliers.
\begin{equation}
\Bar{I}_{G}^{Oz} = \mathrm{GMeanOverOz} := \left( \displaystyle  \prod_p{\frac{I^{Oz}_p}{I^{\pi_\theta}_p}}  \right)^{\frac{1}{\mathcal{|P|}}}
\end{equation}

\end{itemize}

\section{Experiments}

\subsection{Experimental Setup}

For the methods in Table~\ref{table:evalSummary}, we search over a set of hyper-parameters, including the temperature $T$ in Eq.~\ref{eq:normalizing_values}, number of layers, the embedding dimension of the node/edge features, and the output dimension in the hidden layers in the MLPs. We select the best model of each method based on the validation metric (validation MeanOverOz in 45 steps) and report the best model's metrics on the test set.

\begin{figure*}[t]
    \centering
    \begin{tabular}{ccc}
    \includegraphics[width=0.3\linewidth]{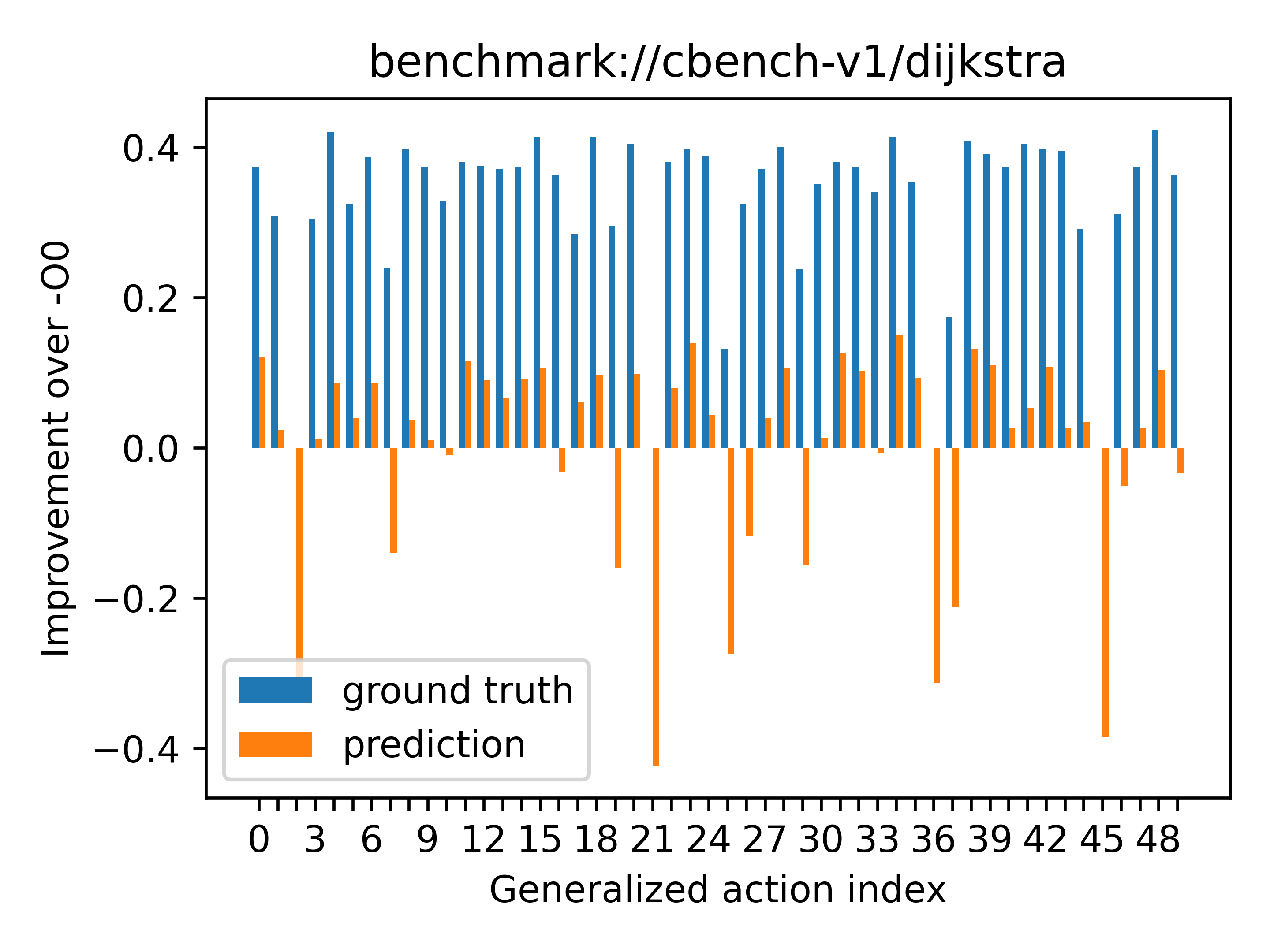} &
    \includegraphics[width=0.3\linewidth]{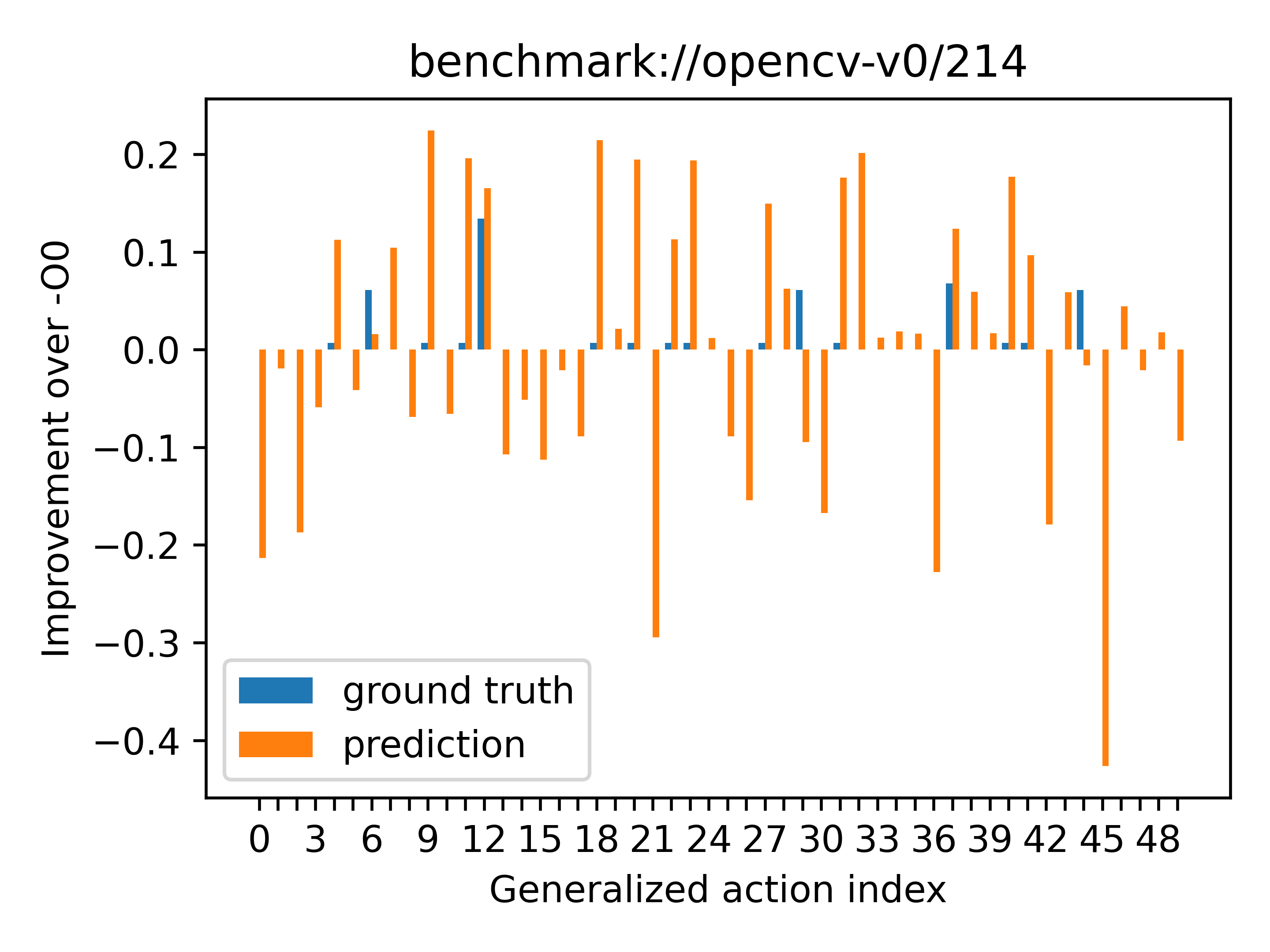} &
    \includegraphics[width=0.3\linewidth]{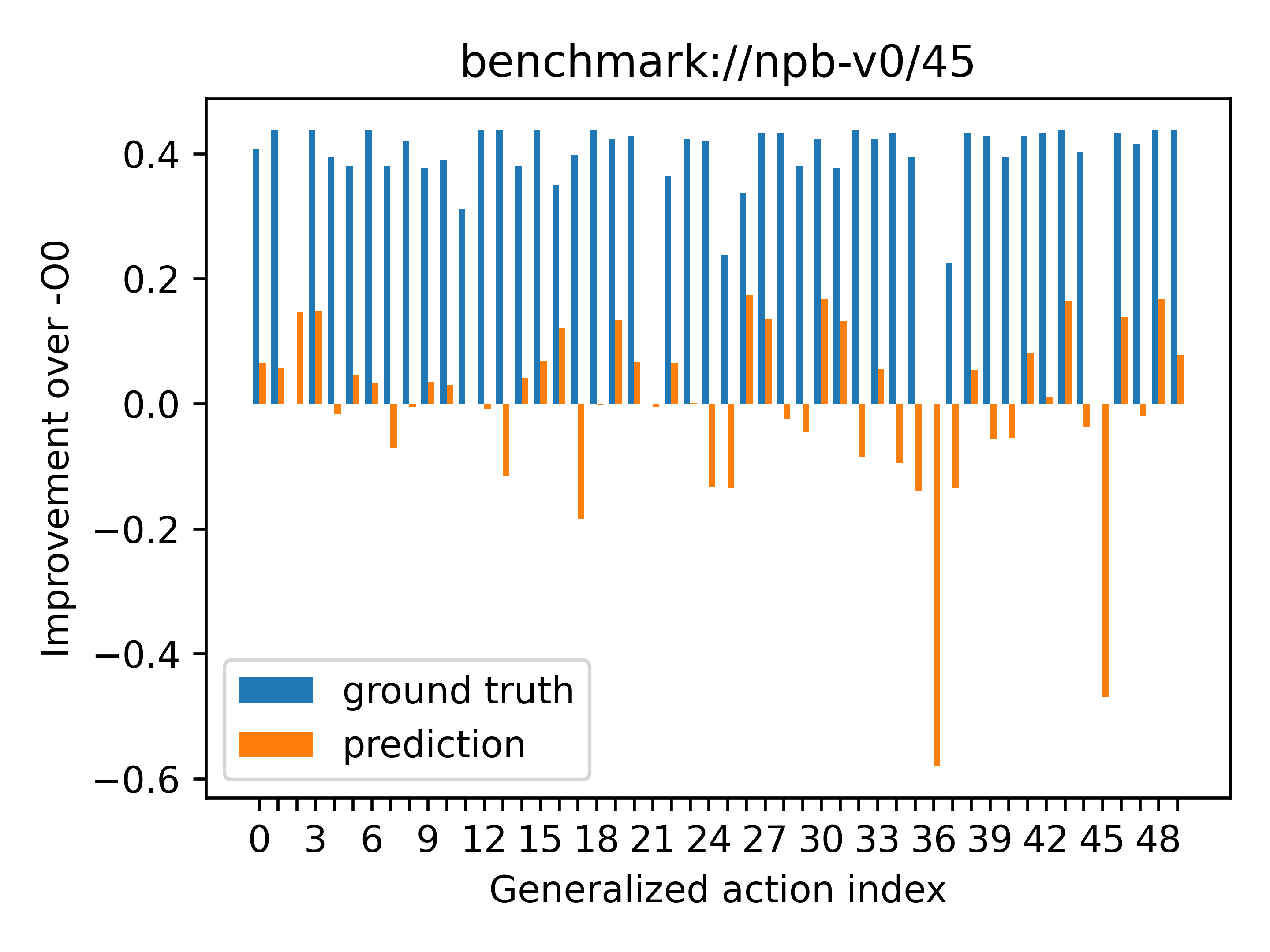}
    \end{tabular}
    \caption{\texttt{GEAN-Q-value-rank}: ground truth of rewards and model predictions over the 50 generalized actions for three benchmarks.}
    \label{fig:gnn_qvalue}
\end{figure*}

\subsection{Baseline Methods}
\label{sec:baseline_methods}
\textbf{Oracle} We consider a brute-force search over the coreset in order to find the best pass sequence for a given program. This gives us an upper-bound of the downstream policy network. In our case the coreset has a total of 50 sequences (625 total passes).

\textbf{Top-45} We also consider how well we would do if the oracle is only allowed to use the most popular pass sequences but limited to 45 passes. We use 45 passes because this is maximum allowed for our all other baselines and our proposed method.

\textbf{RL-PPO} We reproduce the Autophase~\citep{AutoPhase} pipeline by using the state-of-the-art RL algorithm PPO~\citep{ppo} to learn a policy model. We have two program representations for training the RL models, including the Autophase feature and the \programl graphs (note that \citet{AutoPhase} only used the Autophase feature). The Autophase/\programl feature is sent to a GNN/MLP for feature encoding, which outputs a program-level embedding. 
Following~\citet{AutoPhase}, we add an additional action history feature to the RL pipeline, which is a histogram of previously applied passes. The vector of the histogram of action history is divided by 45 (i.e., the number of the total passes in our budget) for normalization. A 2-layer MLP is used to encode the action history to obtain a feature vector, which is concatenated with the program embedding extracted from the \programl graph or the Autophase feature. The concatenated feature is sent to a 2-layer MLP to output the action probability for the policy network. The value network (i.e., the critic) in our PPO pipeline mimics the policy network (i.e., the actor) in feature encoding and outputs a scalar to estimate the state values. The state values are the expectation of the discounted cumulative rewards where the reward in each step is the improvement over \texttt{-O0}: $(I^{(t)}_p - I^{\pi_\theta}_p)/{I^{(t)}_p}$, where ${I^{(t)}_p}$ denotes the current IR instruction count of the program $p$ at time step $t$. This reward is reasonable since it makes the value approximation Markovian. At inference, an action is sampled from the output of the learned policy network at each time step until the total number of steps reaches 45. 

\textbf{Q-value-rank} We consider each pass sequence in the coreset as a \emph{generalized action} and train a Q network to predict the value of each generalized action. Recall that the value vector $\vr^p$ is the highest cumulative reward observed during the rollout of each pass sequence in the coreset on program $p$. The Q-value-rank model is trained to approximate the value vector using a mean squared loss. 

\textbf{BC} We consider learning a standard behavior cloning model to predict the best pass sequences from the coreset, where the best pass sequence is defined as the following. As in the previous bullet point, the value vector is denoted by $\vr^p$. If there is only one $i$ such that $r_i^p = \max_{j \in n} r_j^p$, then the classification label is $i$. If there are multiple such $i$'s (multiple pass sequences) that achieve the largest reward $\max_{j \in n} r_j^p$, then we order the corresponding pass sequences by length-lexicographic ordering. The classification label is selected to be the first one after the ordering. This ensures that our definition for the best pass sequence (among the coreset) for a program is unique and consistent. We use a standard cross entropy loss to train the single-class classification model.

\textbf{NVP} This is the normalized value prediction method described in Section~\ref{sec:nvp}.

The last three methods (i.e., Q-value-rank, BC, and NVP) share the same \textbf{inference protocol}. Note that they all output a vector $\va$ of length 50 whose entries correspond to the pass sequences in the coreset. At inference, we roll out the pass sequences with the highest values in $\va$ one by one until our budget of 45 passes is reached. Since the pass sequence has an average length of 12.5, typically 3 or 4 pass sequences are applied (anything beyond 45 passes will be truncated). For BC and NVP, we also tried sampling pass sequences using the model output $\va$, but that resulted in worse performance. Therefore, we stick to selection by maximum values.


\subsection{Main Results}

\begin{table}[t!]
\centering
\begin{tabular}{@{}lrrr@{}}
\toprule

Method                   & \#passes &\quad $\Bar{I}^{Oz}$ & \quad $\Bar{I}_{G}^{Oz}$ \\ \midrule
Compiler (\texttt{-Oz})  & -        & 0\%                 & 1.000 \\ \midrule
\texttt{Autophase-RL-PPO}    & 45       & -16.3\%             & 0.960            \\
\texttt{GCN-RL-PPO}          & 45       & -12.2\%             & 0.987            \\
\texttt{GGC-RL-PPO}          & 45       & \textbf{-8.5}\% & \textbf{1.000}            \\
\texttt{GIN-RL-PPO}          & 45       & -11.3\%         & 0.991            \\
\texttt{GAT-RL-PPO}          & 45       & -10.3\%         & 0.999            \\
\texttt{GEAN-RL-PPO}          & 45       & -10.0\%         & 0.997       \\ \midrule
\texttt{GEAN-Q-value-rank}     & 45       & \textbf{-0.3\%} & \textbf{1.043}       \\ \midrule
\texttt{Autophase-BC}   & 45       & \textbf{2.6\%}  & \textbf{1.043}     \\
\texttt{GEAN-BC}         & 45       & 2.1\%           & 1.038          \\ \midrule
\texttt{Autophase-NVP} (Ours)   & 45       & 3.8\%           & 1.054                \\
\texttt{GCN-NVP} (Ours)         & 45       & 4.4\%           & 1.058                \\
\texttt{GGC-NVP} (Ours)         & 45       & 4.1\%           & 1.056                \\
\texttt{GIN-NVP} (Ours)         & 45       & 4.2\%           & 1.056                \\
\texttt{GAT-NVP} (Ours)         & 45       & 4.1\%           & 1.055                \\
\texttt{GEAN-NVP} (Ours)   & 45       & \textbf{4.7\%}  & \textbf{1.062}   \\  \midrule

Top-45       & 45       & -7.5\%            & 0.992             \\ 
Oracle        & 625       & \textbf{5.8}\%            & \textbf{1.075}             \\

\bottomrule
\end{tabular}
\caption{Evaluation results on \textbf{held-out test set} averaged over all datasets. $\Bar{I}^{Oz}$ denotes per-program \textbf{MeanOverOz}, and $\Bar{I}_{G}^{Oz}$ denotes \textbf{GMeanOverOz} over all programs. All methods except Compiler and Oracle baselines use 45 compiler optimization passes.}
\label{table:evalSummary}
\end{table}

In Table ~\ref{table:evalSummary} we present the main results of our experiments comparing our proposed method \texttt{-NVP} to various baselines. The test programs were completely held-out during both data-driven learning phases (pass sequence search and model training).

The results show that our model achieves strong performance over the prior method (\texttt{Autophase-RL-PPO}) proposed in ~\cite{AutoPhase}. Additionally, we can see that both the GEAN model and the normalized value prediction over the discovered coreset are needed to achieve the best performance within 45 passes. See Figure~\ref{fig:trajectories} in the Appendix for a visualization of the improvement in program size over the 45 passes on programs from the holdout set.

The \texttt{Oracle} shows strong performance but requires a large number of interactions with the compiler. But, this shows that the pass sequence search generalizes to new unseen programs. This is somewhat unsurprising given that the compiler's built-in hand-tuned pass list (\texttt{-Oz}) works reasonably well for most programs.

The performance of \texttt{Top-45} by itself is weak showing that in order to achieve good results in a reasonable number of passes (45) we need to leverage a general policy and search to select the most likely candidate pass sequences to evaluate.


\subsection{Why Did the RL-PPO Baseline Fail?}
We provide an empirical analysis of why the \texttt{RL-PPO} approaches obtain much lower performance compared to our \texttt{NVP} approaches. We \emph{hypothesize} two possible reasons for the failures of \texttt{RL-PPO}. \textbf{1) Inaccurate state-value estimation results in a high variance in training.} In the PPO algorithm, we have a policy network (the actor) to output a probability distribution over the actions. And we have a value network (the critic) for estimating the state values, where the approximation is based on regressing the cumulative reward of trajectories sampled from the current policy~\citep{ppo}. The update of the policy network is based on the state value outputted by the value network. 
Inaccurate state value estimation results in a high variance in training the policy network. Due to the stochastic nature of the value estimation that stems from the Monte Carlo sampling in cumulative reward regression, it is difficult to analyze how accurately the value network approximates the ground truth state values (which are unknown even for the programs in the training set). We alleviate this issue by analyzing the \texttt{Q-value-rank} approach (as introduced in Section~\ref{sec:baseline_methods}), which can be seen as a simplified version of the value approximation in PPO. The \texttt{Q-value-rank} approach is simpler because the values to estimate are deterministic (i.e., the value vector $\vr^p$ is fixed for a program $p$). Moreover, since we consider the 50 pass sequences in our coreset as 50 generalized actions, the \texttt{Q-value-rank} approach can be seen as the value approximation in a PPO pipeline where a trajectory consists of only a single step over the 50 generalized actions. In this sense, the \texttt{Q-value-rank} approach is a simplified version of the regular value estimation in PPO. Figure~\ref{fig:gnn_qvalue} shows that the value estimation is inaccurate for programs in the held-out test set even for \texttt{Q-value-rank} approach. Therefore, it is even more challenging to estimate the state values in PPO. The inaccuracy leads to a high variance in training the policy network. 
\textbf{2) The reward is very sparse.} As shown in Figure~\ref{fig:reward_matrix}, the rewards are very sparse. Therefore, the good states (i.e., the program states with a higher chance to be optimized in code size reduction) are rarely seen by the value/policy network during rollouts. Then, the value network does not have a good value estimation for those good states, and the policy network does not converge to output a good policy for them. 
We conjecture these two issues are the main reason for why the \texttt{RL-PPO} methods obtain the worst performance as shown in Table~\ref{table:evalSummary}.

\subsection{Ablation Studies} \label{ablation-studies}

\textbf{Ablation for GEAN-NVP} We perform 3 ablation experiments for \texttt{GEAN-NVP}, where we remove graph mixup, mask the edge embedding, and remove the type graph, respectively. The results in Table~\ref{tab:gnn_ablation} show that the test MeanOverOz metric drops after removing any of the three components. Specifically, the performance drops significantly after removing the type graph, which validates its importance.

\textbf{The effect of the temperature} The temperature parameter $T$ in Eq.~\ref{eq:normalizing_values} controls how sharp the target distribution is. The distribution tends to be sharper as the temperature decreases. To analyze the influence of the temperature on the generalization of the model, We vary the temperature $T$ in training the GEAN-NVP model and report the results in Figure~\ref{fig:temp_ablation}.

\begin{figure}[t!]
    \centering
    \includegraphics[width=0.7\linewidth]{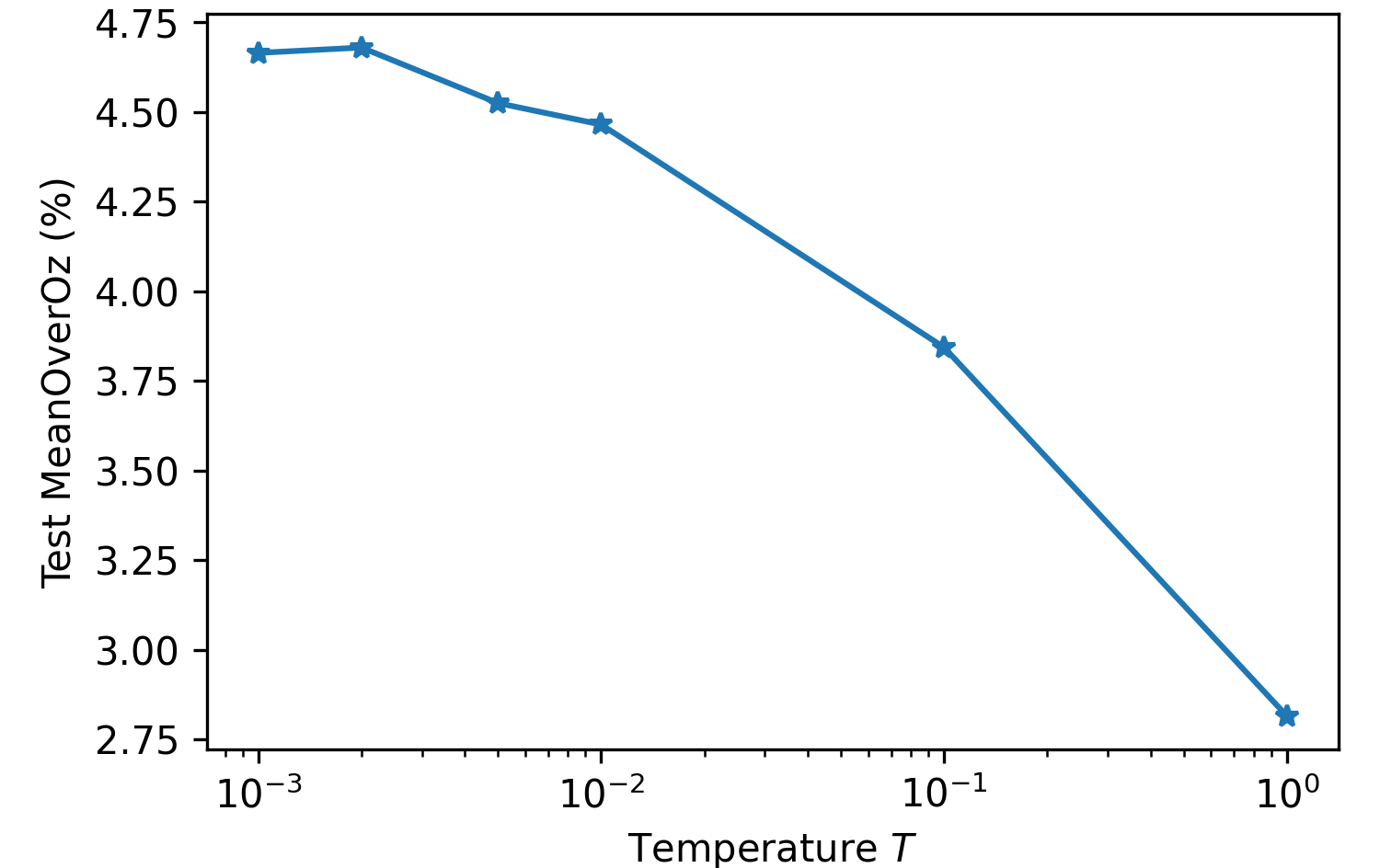}
    \caption{The effect of temperature on GEAN-NVP.}
    \label{fig:temp_ablation}
\end{figure}

\begin{table}[t!]
    \centering
    \resizebox{.65\columnwidth}{!}{%
    \begin{tabular}{lr}
    \toprule
    \textbf{Method} & \textbf{Test MeanOverOz} \\
    \midrule
    \texttt{GEAN-NVP} & 4.7\% (0.0\%) \\
    \midrule
    - graph mixup & 4.4\% (-0.3\%) \\
    - edge embedding & 4.4\% (-0.3\%)\\
    - type graph & -5.3\% (-10.0\,\%)\\
    \bottomrule
    \end{tabular}%
    }
    \caption{Ablation on GEAN-NVP components.}
    \label{tab:gnn_ablation}
\end{table}

\section{Conclusions}
In this paper, we develop a pipeline for program size reduction under limited compilation passes. We find that it is a great challenge to approximate the state values (i.e., the maximum code size reduction) for a diverse set of programs, so existing state-of-the-art methods such as proximal policy optimization (PPO) fail to obtain good performances. To tackle this problem, we propose a search algorithm that discovers a good set of pass sequences (i.e., the coreset), which generalizes well to unseen programs. Moreover, we propose to train a GNN to approximate the normalized state values of programs over the coreset, for which we propose a variant of the graph attention network, termed GEAN. Our pipeline of coreset discovery and normalized value prediction via GEAN perform significantly better than the PPO baselines. 
\pagebreak

\bibliography{main}
\bibliographystyle{icml2023}

\clearpage
\onecolumn

\appendix
\section{Proofs}
\begin{lemma}
The objective $J(S)$ defined in Eqn.~\ref{eq:obj-picking-coreset} 
\begin{equation}
\max_{|S| \le K} J(S) = \sum_{i=1}^N \max_{j \in S} r_{ij}
\end{equation}
(with the additional definition $J(\emptyset) = 0$), is a nonnegative and monotone submodular function. 
\end{lemma}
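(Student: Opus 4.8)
The plan is to exploit the additive structure of $J$ and reduce the entire statement to a single-program fact. Writing $f_i(S) := \max_{j \in S} r_{ij}$ for each program $i$ (with the convention $f_i(\emptyset) = 0$, so that $J(\emptyset) = \sum_i f_i(\emptyset) = 0$ matches the stated definition), we have $J(S) = \sum_{i=1}^N f_i(S)$. Since nonnegativity, monotonicity, and submodularity are all preserved under nonnegative linear combinations, it suffices to establish these three properties for each $f_i$ separately and then sum over $i$.

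First I would dispatch nonnegativity and monotonicity, which are essentially immediate. Nonnegativity follows because $r_{ij} > 0$ for every entry of the reward matrix (it is a ratio of positive code sizes), so $f_i(S) \geq 0$ for all $S$ and hence $J(S) \geq 0$. Monotonicity follows because enlarging the index set can only enlarge a maximum: if $S \subseteq T$ then $f_i(S) = \max_{j \in S} r_{ij} \leq \max_{j \in T} r_{ij} = f_i(T)$, and summing over $i$ preserves the inequality.

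The core of the argument is the submodularity of each $f_i$, which I would prove via the diminishing-marginal-returns characterization: for all $S \subseteq T$ and every $k \notin T$, one checks $f_i(S \cup \{k\}) - f_i(S) \geq f_i(T \cup \{k\}) - f_i(T)$. The key computation is that the marginal gain of adding $k$ has the closed form
$$f_i(S \cup \{k\}) - f_i(S) = \bigl(r_{ik} - f_i(S)\bigr)^+,$$
where $(x)^+ := \max(x, 0)$, simply because $\max(f_i(S), r_{ik}) - f_i(S) = \max(0, r_{ik} - f_i(S))$. Since $S \subseteq T$ gives $f_i(S) \leq f_i(T)$ by the monotonicity just established, and since $t \mapsto (r_{ik} - t)^+$ is nonincreasing in $t$, we obtain $(r_{ik} - f_i(S))^+ \geq (r_{ik} - f_i(T))^+$, which is exactly the desired inequality. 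Summing these per-program inequalities over $i$ yields the submodularity of $J$.

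The only point requiring care — and the one I would treat as the main obstacle — is the empty-set bookkeeping, since the definition fixes $J(\emptyset) = 0$ as a separate stipulation and the convention $f_i(\emptyset) = 0$ must be verified compatible with the marginal-gain formula. Because every $r_{ik} > 0$, adding $k$ to the empty set gives marginal gain $(r_{ik} - 0)^+ = r_{ik}$, which is the largest marginal gain achievable for $k$ over any $S$; this keeps the diminishing-returns inequality valid even in the boundary case $S = \emptyset$ and confirms that $f_i(\emptyset) = 0$ is the consistent choice.
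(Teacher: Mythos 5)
Your proof is correct, and it follows the same overall strategy as the paper's: decompose $J$ additively into per-program functions $f_i(S)=\max_{j\in S}r_{ij}$, observe that nonnegativity, monotonicity, and submodularity are preserved under summation, and reduce everything to the single-program case. Where you diverge is in how the submodularity of each $f_i$ is established. The paper first absorbs the $J(\emptyset)=0$ stipulation by adjoining a dummy index $0$ with $\hat r_{i,0}=0$ and writing $J_i(S)=\max_{j\in S\cup\{0\}}\hat r_{ij}$, and then proves the diminishing-returns inequality by a three-way case analysis on whether the new reward $\hat r_{ij}$ is a maximum over the larger set, only over the smaller set, or over neither. You instead derive the closed-form marginal gain $f_i(S\cup\{k\})-f_i(S)=\bigl(r_{ik}-f_i(S)\bigr)^{+}$ and conclude from the facts that $f_i(S)\le f_i(T)$ and that $t\mapsto(r_{ik}-t)^{+}$ is nonincreasing. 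The two arguments are logically equivalent (your identity is exactly what the paper's case analysis verifies branch by branch), but yours is more compact, handles the empty-set boundary case uniformly through the same formula rather than via an auxiliary element, and makes the "diminishing returns" intuition explicit in a single inequality. Either version is a complete proof.
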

\begin{proof}
Since $r_{ij} > 0$, it is clear that $J(S) \ge 0$ is nonnegative. 

To incorporate the special case $J(\emptyset) = 0$, note that $J(S)$ can be written as
\begin{equation}
\max_{|S| \le K} J(S) = \sum_{i=1}^N \max\left(\max_{j \in S} r_{ij}, 0\right).
\end{equation}
Let $\hat r_{ij} = r_{ij}$ and $\hat r_{i,0} = 0$, then in order to prove $J(S)$ is monotone and submodular, by additivity, we only need to prove $J_i(S) := \max_{j\in S \cup \{0\}} \hat r_{ij}$ is monotone and submodular. 

For any $A\subseteq B$, it is clear that 
\begin{equation}
   J_i(A) = \max_{j\in A \cup \{0\}} \hat r_{ij} \le \max_{j\in B \cup \{0\}} \hat r_{ij} = J_i(B)
\end{equation}
So $J_i(S)$ is monotone. 

To prove submodularity, for any $A\subseteq B$, we comare the quatity of $J_i(A\cup \{j\} ) - J_i(A)$ and $J_i(B\cup \{j\} ) - J_i(B)$ for $j \notin B$.

\textbf{Case 1: $\hat r_{ij}$ is a maximum over the subset $B$}. In this case, then $\hat r_{ij}$ is also a maximum over the subset $A$. Then $J_i(A\cup \{j\}) = J_i(B\cup\{j\}) = \hat r_{ij}$, since $J_i(A) \le J_i(B)$, we have:
\begin{equation}
    J_i(A\cup \{j\} ) - J_i(A) \ge J_i(B\cup \{j\} ) - J_i(B) \label{eq:submodularity-Ji}
\end{equation}

\textbf{Case 2: $\hat r_{ij}$ is a maximum over $A$ but not in $B$}. Then $J_i(A\cup \{j\}) - J_i(A) \ge 0$, but $J_i(B\cup \{j\}) - J_i(B) = 0$. So Eqn.~\ref{eq:submodularity-Ji} still holds. 

\textbf{Case 3: $\hat r_{ij}$ is neither a maximum in $A$ or in $B$}. Then both $J_i(A\cup \{j\}) - J_i(A) = 0$ and $J_i(B\cup \{j\}) - J_i(B) = 0$. So Eqn.~\ref{eq:submodularity-Ji} still holds.

By definition of submodularity (Eqn.~\ref{eq:submodularity-Ji}), we know $J_i(S)$ is submodular and so does $J(S)$.
\end{proof}

\section{GEAN Encoding}

Our Graph Edge Attention Network (GEAN) has the following key features.

\textbf{Attention with edge features} We modify the attention mechanism in GAT to output an edge embedding and two node features for neighborhood aggregation. 
For clarity, we show a table containing the notations used in the GNN in Table~\ref{tab:gnn_notations}. 
The feature update process can be mathematically defined by the following equations, where $M_i, i = 1,\dots, 5$ is an encoding fully connected layer.

\begin{figure}
    \centering
    \includegraphics[width=.5\textwidth]{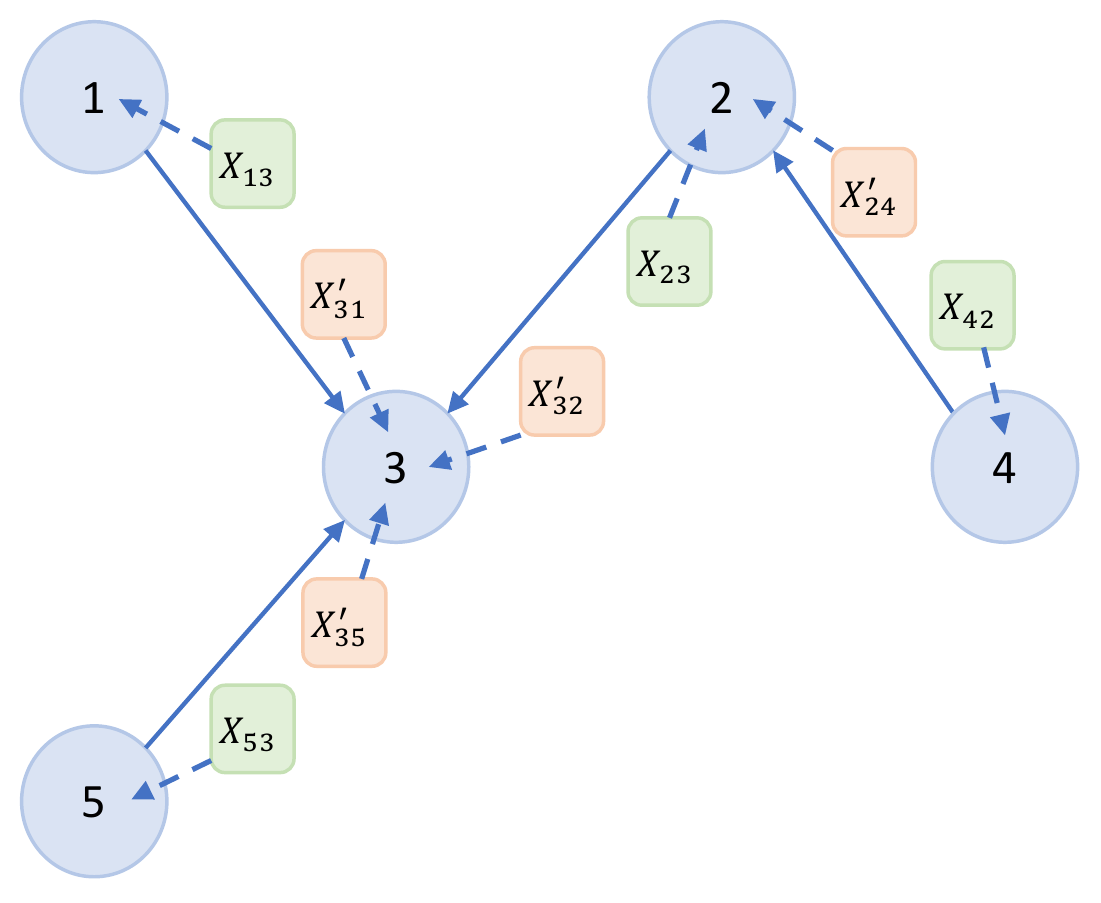}

    \caption{Graph attention. Circles denote nodes and solid arrows denote edges. Squares are the calculated features, and dash arrows represent feature aggregation. The orange/green squares denote the features to be aggregated in the target/source nodes of the edges. The edge embedding and the attention are not shown.}
   
    \label{graph_attention}
\end{figure}

\begin{table}[t!]
\centering
\begin{tabular}{ll}
\toprule
Notation & Meaning \\
\midrule
$\gE$ & The set of edges in the graph \\
$(i,j)$ & Edge from node $i$ pointing to node $j$ \\
$X_i^{(t)}$ &Representation of node $i$ at layer $t$ \\
$E_{i \to j}^{(t)}$ &Representation of the edge $(i,j)$ at layer $t$ \\
$X_{ij}^{(t)}$ &Representation for node $i$ associated with edge $(i, j)$ \\
$X_{ij}^{\prime(t)}$ &Representation for node $i$ associated with edge $(j, i)$ \\
$a_{ij}^{(t)}$ & Raw attention associated with representation $X_{ij}^{(t)}$ \\
$a_{ij}^{\prime(t)}$ &  Raw attention associated with representation $X_{ij}^{\prime(t)}$ \\
$\alpha_{ij}^{(t)}$ & Normalized attention associated with $a_{ij}^{(t)}$ \\
$\alpha_{ij}^{\prime(t)}$ &  Normalized attention associated with $a_{ij}^{\prime(t)}$ \\
$\mathcal{T}_i$ & Target neighbors of node $i$: $\{j | (i,j) \in \gE\}$ \\
$\mathcal{S}_i$ & Source neighbors of node $i$: $\{j | (j,i) \in \gE\}$ \\
\bottomrule
\end{tabular}
\caption{The notations in GEAN.}
\label{tab:gnn_notations}
\end{table}

\begin{align}
&X_{ij}^{\prime(t+1)} = M_1(X_i^{(t)}, E_{i \to j}^{(t)}, X_j^{(t)}), \label{eq:target_feat}\\
&a_{ij}^{\prime(t+1)} = M_2(X_i^{(t)}, E_{i \to j}^{(t)}, X_j^{(t)}), \label{eq:target_attn}\\
&X_{ji}^{(t+1)} = M_3(X_i^{(t)}, E_{i \to j}^{(t)}, X_j^{(t)}), \label{eq:source_feat}\\
&a_{ji}^{(t+1)} = M_4(X_i^{(t)}, E_{i \to j}^{(t)}, X_j^{(t)}), \label{eq:source_attn}\\
&E_{i \to j}^{(t+1)} = M_5(X_i^{(t)}, E_{i \to j}^{(t)}, X_j^{(t)}), \label{eq:edge_feat}
\end{align}

In words, the node-edge-node triplet, $(X_i^{(t)}, E_{i \to j}^{(t)}, X_j^{(t)})$, associated with edge $(i, j)$, is encoded by fully connected layers to output 5 features, including $X_{ij}^{\prime(t+1)}$ and $a_{ij}^{\prime(t+1)}$ (a representation and attention to be aggregated in node $i$), and $X_{ji}^{(t+1)}$ and $a_{ji}^{(t+1)}$ (a representation and attention to be aggregated in node $j$), and the updated edge representation $E_{i \to j}^{(t+1)}$. Note that the features to be aggregated to a target node are marked with the $^\prime$, and those to a source node are without the $^\prime$ (see Figure~\ref{graph_attention}).
After the feature encoding, we perform an attention-weighted neighborhood aggregation for each node, which can be mathematically described by the following equations. 
\begin{equation}
    \left[\left[\alpha_{ij}^{(t+1)}\right]_{j \in \mathcal{T}_i} \left\| \left[\alpha_{ij}^{\prime(t+1)}\right]_{j \in \mathcal{S}_i} \right. \right] 
    = \operatorname{Softmax}\left[\left[a_{ij}^{(t+1)}\right]_{j \in \mathcal{T}_i} \left\| \left[a_{ij}^{\prime(t+1)}\right]_{j \in \mathcal{S}_i} \right. \right]
\end{equation}
\begin{align}
X_{i}^{(t+1)} = \sum_{j \in \mathcal{T}_i} \alpha_{ij}^{(t+1)} X_{ij}^{(t+1)} + \sum_{j \in \mathcal{S}_i} \alpha_{ij}^{\prime(t+1)} X_{ij}^{\prime(t+1)}
\end{align}
where $\|$ denotes concatenation.

In comparison, GAT only outputs an attention score by encoding the node-edge-node triplet:  $a_{ij}^{(t+1)} = (X_i^{(t)}, E_{i \to j}^{(t)}, X_j^{(t)})$, and the feature for neighborhood aggregation is only conditioned on the neighbors: $X_{ij}^{(t+1)} = \mathrm{MLP}(X_j^{(t)})$. To summarize, our encoding approach can ensure that the GNN model is aware of the direction of the edge and that the edge embedding is updated in each layer, which helps improve the performance (as shown in Table~\ref{table:evalSummary}).

\section{Detailed Results}

\begin{table}[h!]

\centering
\begin{tabular}{@{}lrrrrrr@{}}
\toprule

Dataset  &   Oracle  &  Top-45  &  Autophase-RL-PPO  &  Autophase-NVP  &  GEAN-RL-PPO  &  GEAN-NVP  \\ \midrule

anghabench-v1 & 0.7\%/1.011 & -1.0\%/0.996 & -15.9\%/0.974 & -0.2\%/1.002 & -0.8\%/0.996 & -0.0\%/1.003  \\
blas-v0 & 2.6\%/1.028 & -0.4\%/0.997 & -1.7\%/0.984 & 2.1\%/1.023 & -1.0\%/0.990 & 2.4\%/1.026  \\
cbench-v1 & 3.5\%/1.041 & -2.4\%/0.984 & -10.1\%/0.925 & -0.1\%/1.008 & -1.6\%/0.998 & 2.2\%/1.028  \\
chstone-v0 & 9.3\%/1.106 & 1.2\%/1.016 & 1.3\%/1.018 & 8.3\%/1.095 & 5.4\%/1.060 & 8.8\%/1.101  \\
clgen-v0 & 5.4\%/1.060 & 3.1\%/1.034 & -0.5\%/0.998 & 4.6\%/1.051 & 0.3\%/1.005 & 5.0\%/1.056  \\
csmith-v0 & 21.2\%/1.320 & -96.3\%/0.851 & -116.0\%/0.954 & 21.1\%/1.320 & -124.6\%/0.965 & 21.1\%/1.320  \\
github-v0 & 1.0\%/1.011 & 0.2\%/1.002 & 0.1\%/1.001 & 0.9\%/1.010 & -0.2\%/0.999 & 0.9\%/1.010  \\
linux-v0 & 0.6\%/1.007 & -0.4\%/0.998 & -0.5\%/0.997 & 0.6\%/1.006 & -2.3\%/0.989 & 0.6\%/1.007  \\
llvm-stress-v0 & 6.3\%/1.087 & -18.9\%/0.885 & -67.0\%/0.731 & 0.7\%/1.035 & -17.5\%/0.888 & 2.1\%/1.045  \\
mibench-v1 & 1.7\%/1.020 & 0.0\%/1.003 & -2.8\%/0.976 & -5.8\%/0.963 & -0.3\%/1.000 & -0.1\%/1.003  \\
npb-v0 & 9.8\%/1.159 & 5.7\%/1.085 & 0.9\%/1.035 & 5.1\%/1.079 & 3.7\%/1.068 & 5.5\%/1.085  \\
opencv-v0 & 5.2\%/1.061 & 1.0\%/1.013 & 0.5\%/1.007 & 4.2\%/1.051 & 0.3\%/1.004 & 4.8\%/1.057  \\
poj104-v1 & 7.8\%/1.105 & 3.9\%/1.055 & -17.5\%/0.876 & 6.1\%/1.080 & -0.7\%/1.008 & 6.3\%/1.082  \\
tensorflow-v0 & 6.1\%/1.077 & -0.2\%/0.998 & 0.2\%/1.004 & 5.9\%/1.075 & -0.2\%/0.998 & 5.9\%/1.075  \\

\midrule

Average & 5.8\%/1.075 & -7.5\%/0.992 & -16.3\%/0.960 & 3.8\%/1.054 & -10.0\%/0.997 & 4.7\%/1.062  \\

\bottomrule
\end{tabular}
\caption{Detailed evaluation results on held-out test sets.}
\end{table}

\clearpage

\section{Trajectory comparison}
\begin{figure}[H]
    \centering
    \includegraphics[width=.85\textwidth]{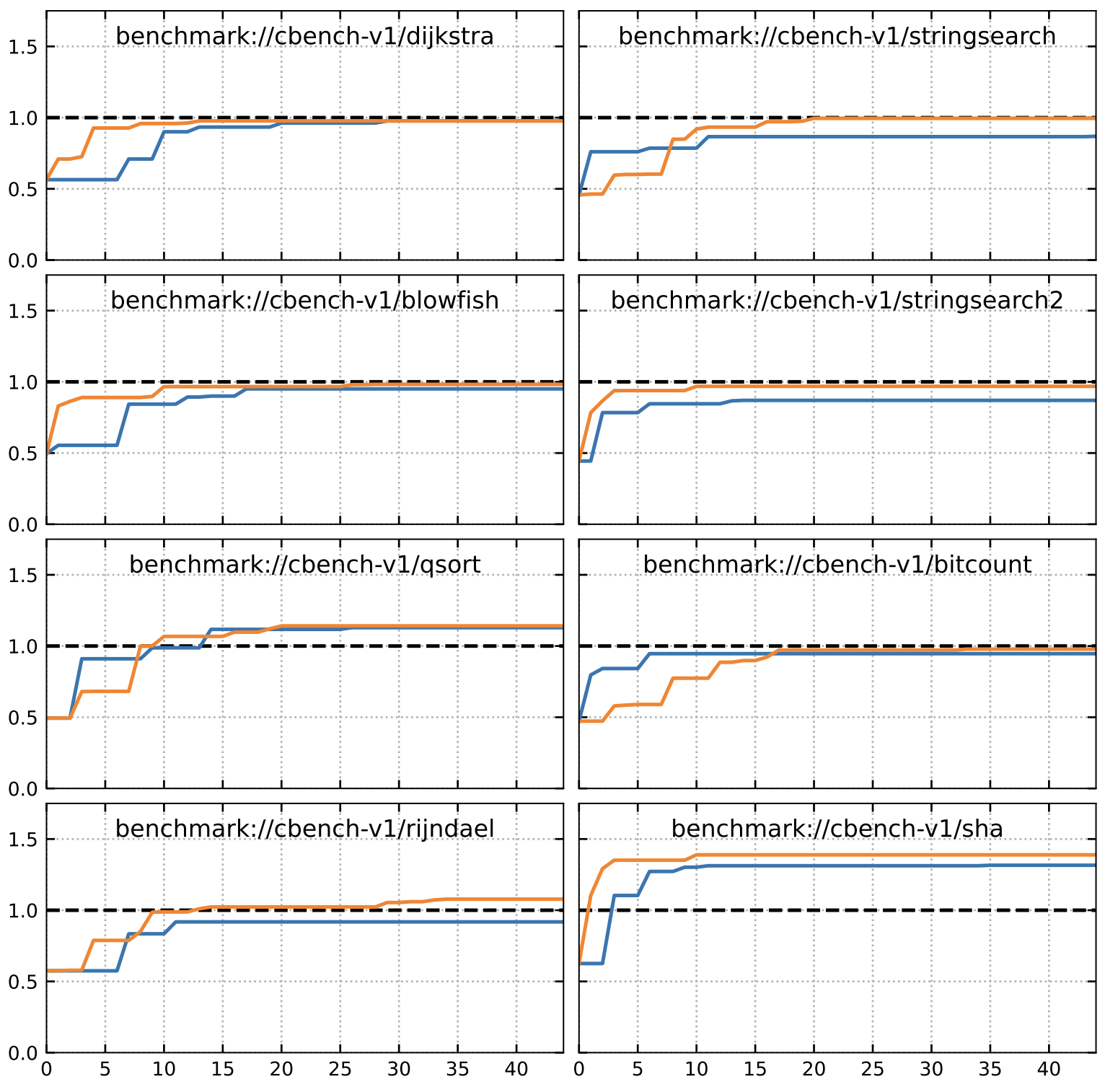}
    \caption{Program optimization example over many steps comparing the \textbf{Autophase-RL-PPO} (blue) approach with our \textbf{GEAN-NVP} (orange) approach. The dashed line represents the compiler default \texttt{-Oz} performance and higher is better.}
    \label{fig:trajectories}
\end{figure}
\clearpage

\section{Compiler passes}

\begin{table}[H]
	\begin{center}
		\begin{tabular}{rlrlrl}
			\toprule
			Index & Flag & Index & Flag & Index & Flag \\
			\midrule
			0 & -add-discriminators & 42 & -globalsplit & 84 & -lower-expect \\
			1 & -adce & 43 & -guard-widening & 85 & -lower-guard-intrinsic \\
			2 & -aggressive-instcombine & 44 & -hotcoldsplit & 86 & -lowerinvoke \\
			3 & -alignment-from-assumptions & 45 & -ipconstprop & 87 & -lower-matrix-intrinsics \\
			4 & -always-inline & 46 & -ipsccp & 88 & -lowerswitch \\
			5 & -argpromotion & 47 & -indvars & 89 & -lower-widenable-condition \\
			6 & -attributor & 48 & -irce & 90 & -memcpyopt \\
			7 & -barrier & 49 & -infer-address-spaces & 91 & -mergefunc \\
			8 & -bdce & 50 & -inferattrs & 92 & -mergeicmps \\
			9 & -break-crit-edges & 51 & -inject-tli-mappings & 93 & -mldst-motion \\
			10 & -simplifycfg & 52 & -instsimplify & 94 & -sancov \\
			11 & -callsite-splitting & 53 & -instcombine & 95 & -name-anon-globals \\
			12 & -called-value-propagation & 54 & -instnamer & 96 & -nary-reassociate \\
			13 & -canonicalize-aliases & 55 & -jump-threading & 97 & -newgvn \\
			14 & -consthoist & 56 & -lcssa & 98 & -pgo-memop-opt \\
			15 & -constmerge & 57 & -licm & 99 & -partial-inliner \\
			16 & -constprop & 58 & -libcalls-shrinkwrap & 100 & -partially-inline-libcalls \\
			17 & -coro-cleanup & 59 & -load-store-vectorizer & 101 & -post-inline-ee-instrument \\
			18 & -coro-early & 60 & -loop-data-prefetch & 102 & -functionattrs \\
			19 & -coro-elide & 61 & -loop-deletion & 103 & -mem2reg \\
			20 & -coro-split & 62 & -loop-distribute & 104 & -prune-eh \\
			21 & -correlated-propagation & 63 & -loop-fusion & 105 & -reassociate \\
			22 & -cross-dso-cfi & 64 & -loop-guard-widening & 106 & -redundant-dbg-inst-elim \\
			23 & -deadargelim & 65 & -loop-idiom & 107 & -rpo-functionattrs \\
			24 & -dce & 66 & -loop-instsimplify & 108 & -rewrite-statepoints-for-gc \\
			25 & -die & 67 & -loop-interchange & 109 & -sccp \\
			26 & -dse & 68 & -loop-load-elim & 110 & -slp-vectorizer \\
			27 & -reg2mem & 69 & -loop-predication & 111 & -sroa \\
			28 & -div-rem-pairs & 70 & -loop-reroll & 112 & -scalarizer \\
			29 & -early-cse-memssa & 71 & -loop-rotate & 113 & -separate-const-offset-from-gep \\
			30 & -early-cse & 72 & -loop-simplifycfg & 114 & -simple-loop-unswitch \\
			31 & -elim-avail-extern & 73 & -loop-simplify & 115 & -sink \\
			32 & -ee-instrument & 74 & -loop-sink & 116 & -speculative-execution \\
			33 & -flattencfg & 75 & -loop-reduce & 117 & -slsr \\
			34 & -float2int & 76 & -loop-unroll-and-jam & 118 & -strip-dead-prototypes \\
			35 & -forceattrs & 77 & -loop-unroll & 119 & -strip-debug-declare \\
			36 & -inline & 78 & -loop-unswitch & 120 & -strip-nondebug \\
			37 & -insert-gcov-profiling & 79 & -loop-vectorize & 121 & -strip \\
			38 & -gvn-hoist & 80 & -loop-versioning-licm & 122 & -tailcallelim \\
			39 & -gvn & 81 & -loop-versioning & 123 & -mergereturn \\
			40 & -globaldce & 82 & -loweratomic &  &  \\
			41 & -globalopt & 83 & -lower-constant-intrinsics &  &  \\
			\bottomrule
		\end{tabular}
	\end{center}
	\caption{A list of LLVM compiler pass indices and their corresponding command line flag.}
	\label{table:llvm_pass_table}
\end{table}

\end{document}